\newtheorem{problem}{Problem}
\newtheorem{definition}{Definition}
\newtheorem{theorem}{Theorem}
\newtheorem{remark}{Remark}
\begin{document}
\title{\LARGE \bf Backstepping Mean-Field Density Control for Large-Scale Heterogeneous Nonlinear Stochastic Systems}

\author{Tongjia Zheng$^{1}$, Qing Han$^{2}$ and Hai Lin$^{1}$
\thanks{*This work was supported by the National Science Foundation under Grant No. IIS-1724070, CNS-1830335, IIS-2007949.}
\thanks{$^{1}$Tongia Zheng and Hai Lin are with the Department of Electrical Engineering, University of Notre Dame, Notre Dame, IN 46556, USA. {\tt\small  tzheng1@nd.edu, hlin1@nd.edu.}} 
\thanks{$^{2}$Qing Han is with the Department of Mathematics, University of Notre Dame, Notre Dame, IN 46556, USA. {\tt\small Qing.Han.7@nd.edu.}} 
}

\maketitle
\thispagestyle{empty}
\pagestyle{empty}

\begin{abstract}
This work studies the problem of controlling the mean-field density of large-scale stochastic systems, which has applications in various fields such as swarm robotics.
Recently, there is a growing amount of literature that employs mean-field partial differential equations (PDEs) to model the density evolution and uses density feedback to design control laws which, by acting on individual systems, stabilize their density towards a target profile.
In spite of its stability property and computational efficiency, the success of density feedback relies on assuming the systems to be homogeneous first-order integrators (plus white noise) and ignores higher-order dynamics, making it less applicable in practice.
In this work, we present a backstepping design algorithm that extends density control to heterogeneous and higher-order stochastic systems in strict-feedback forms.
We show that the strict-feedback form in the individual level corresponds to, in the collective level, a PDE (of densities) distributedly driven by a collection of heterogeneous stochastic systems.
The presented backstepping design then starts with a density feedback design for the PDE, followed by a sequence of stabilizing design for the remaining stochastic systems.
We present a candidate control law with stability proof and apply it to nonholonomic mobile robots.
A simulation is included to verify the effectiveness of the algorithm.
\end{abstract}


\section{Introduction}
The recent years have witnessed an enormous growth of research on the control of large-scale stochastic systems, and specific topics appear in different forms such as deployment of sensors and transportation of autonomous vehicles.
In this work, we study the problem of controlling the probability density of a large group of heterogeneous nonlinear systems.

Control problems of large-scale systems have been extensively studied by a wide range of methodologies, such as graph theoretic design \cite{olfati2007consensus} and game theoretic formulation (especially potential games \cite{marden2009cooperative} and mean-field games \cite{huang2006large}).
We pursue a strategy that directly controls the mean-field density of the systems.
This control strategy shares similar philosophies like mean-field games and mean-field type control \cite{bensoussan2013mean} in the usage of the mean-field density.
However, unlike mean-field games/control where the mean-field density is usually used to approximate the collective effect of the whole population, we aim at the direct control of this mean-field density.

Density control has been studied using discrete- and continuous-state models.
The former relies on a partition of the state space and boils down to designing transition matrices for Markov chains \cite{bandyopadhyay2017probabilistic, djeumou2020probabilistic, de2021discrete}, which usually suffers from the state explosion issue.
Continuous-state models result in a control problem of PDEs that describe the time evolution of the density function.
Early efforts on the density control of PDEs tend to adopt an optimal control formulation, which relies on expensive numerical computation of the optimality conditions and usually only generates open-loop control \cite{foderaro2014distributed, elamvazhuthi2015optimal}.
Optimal density control is also studied in \cite{ridderhof2019nonlinear, chen2021density} by establishing a link between density control and the Schr\"odinger Bridge problem.
However, except for the linear case which adopts closed-form solutions, numerically solving the associated Schr\"odinger Bridge problem also suffers from the curse of dimensionality.
Recent efforts have sought to explicitly use the mean-field density as feedback to design closed-loop and closed-form control \cite{de2018optimal, krishnan2018transport, elamvazhuthi2018bilinear, zheng2021transporting}.
Density feedback laws are able to guarantee closed-loop stability and can be efficiently computed on board.
However, the success of density feedback design relies on the assumption that the systems are homogeneous first-order integrators (with white noise).
This assumption makes the density control strategy less applicable for many systems in practice, such as wheeled mobile robots.
Stochastic systems with heterogeneous and higher-order dynamics are difficult to handle and, to the best of our knowledge, have not been studied so far.

In this work, we aim to extend density feedback design to heterogeneous and nonlinear (in particular, strict-feedback) stochastic systems.
The control objective is to design control laws to stabilize the density of states of a collection of strict-feedback stochastic systems.
The strict-feedback form is not a restrictive requirement, because many mobile vehicle robots satisfy this form and some nonlinear stochastic systems can be converted to strict-feedback forms through coordinate transformation \cite{pan2002canonical}.
We will show that the strict-feedback form in the individual level corresponds to, in the collective level, a PDE (of densities) distributedly driven by a collection of heterogeneous stochastic systems. 
Our key idea is to perform a backstepping design which starts with a density feedback design for the PDE, followed by a sequence of stabilizing design for the remaining stochastic systems.
We note that backstepping design for stochastic systems has been widely studied; see, e.g., \cite{deng1997stochastic, liu2007decentralized}.
Unlike these works where the control objective is to stabilize each system in the individual level, our control goal is to stabilize the density of these systems, meaning that each system does not necessarily exhibit equilibrium behaviors in the individual level.
This is a non-classical control problem and requires new backstepping design algorithms. 
In summary, our contribution includes: 1) presenting a backstepping design algorithm for the density control problem of large-scale heterogeneous and nonlinear stochastic systems, 2) providing specific control laws with stability analysis, and 3) applying the design algorithm to nonholonomic mobile robots as an illustration.

The rest of the paper is organized as follow. 
Section \ref{section:preliminaries} introduces some preliminaries. 
Problem formulation is given in Section \ref{section:problem formulation}. 
Section \ref{section:backstepping density control} is our main results in which we present the backstepping design algorithm and provide specific control laws.
Section \ref{section:example} provides an example using nonholonomic mobile robots.
Section \ref{section:simulation} presents an agent-based simulation to verify the effectiveness.

\section{Preliminaries}
\label{section:preliminaries}

\subsection{Notations}\label{section:notation}
For $x\in\mathbb{R}^n$, its Euclidean norm is denoted by $\|x\|$.
Let $E\subset\mathbb{R}^n$ be a measurable set. 
For $f:E\to\mathbb{R}$, its $L^2$-norm is denoted by $\|f\|_{L^2(E)}:=(\int_{E}|f(x)|^{2}dx)^{1/2}$.
We will omit $E$ in the notation when it is clear.
The gradient and Laplacian of a scalar function $f$ are denoted by $\nabla f$ and $\Delta f$, respectively
The divergence of a vector field $F$ is denoted by $\nabla\cdot F$. 


\subsection{Input-to-state stability}
Define the following classes of comparison functions:
\begin{align*}
    \mathcal{P} &:=\{\gamma:\mathbb{R}_+\to\mathbb{R}_+|\gamma\text{ is continuous, }\gamma(0)=0,\\
    &\qquad\text{ and }\gamma(r)>0\text{ for }r>0\}\\
    \mathcal{K} &:=\{\gamma\in\mathcal{P}\mid\gamma\text{ is strictly increasing}\} \\
    \mathcal{K}_\infty &:=\{\gamma\in\mathcal{K}\mid\gamma\text{ is unbounded}\}\\
    \mathcal{VK}_\infty &:=\{\alpha\in\mathcal{K}_\infty\mid\alpha\text{ is convex}\}\\
    \mathcal{CK}_{\infty} &:=\{\alpha\in\mathcal{K}_\infty\mid\alpha\text{ is concave}\}\\
    \mathcal{L} &:=\{\gamma:\mathbb{R}_+\to\mathbb{R}_+\mid\gamma\text{ is continuous and strictly}\\
    &\quad\quad\text{decreasing with }\lim_{t\to\infty}\gamma(t)=0\}\\
    \mathcal{KL} &:=\{\beta:\mathbb{R}_+\times\mathbb{R}_+\to\mathbb{R}_+\mid\beta(\cdot,t)\in\mathcal{K},\forall t\geq0,\\
    &\qquad\beta(r,\cdot)\in\mathcal{L},\forall r>0\}.
\end{align*}

We introduce the ISS concept applicable for both finite- and infinite-dimensional deterministic systems \cite{dashkovskiy2013input}.
Let $\left(X,\|\cdot\|_X\right)$ and $\left(U,\|\cdot\|_{U}\right)$ be the state and input space, endowed with norms $\|\cdot\|_X$ and $\|\cdot\|_{U}$, respectively.
Denote $U_c=PC(\mathbb{R}_+;U)$, the space of piecewise right-continuous functions from $\mathbb{R}_+$ to $U$, equipped with the sup-norm.
Consider a control system $\Sigma=(X,U_c,\phi)$ where $\phi: \mathbb{R}_+\times X\times U_c\to X$ is a transition map.
Let $x(t)=\phi(t,x_0,u)$.

\begin{definition} \label{dfn:(L)ISS}
$\Sigma$ is called \textit{input-to-state stable (ISS)}, if $\exists\beta\in\mathcal{KL},\gamma\in\mathcal{K}$, such that
\begin{equation*}\label{eq:(L)ISS}
    \|x(t)\|_X\leq\beta(\|x_0\|_X, t)+\gamma\Big(\sup_{0\leq s\leq t}\|u(s)\|_{U}\Big),
\end{equation*}
$\forall x_0\in X,\forall u\in U_c$ and $\forall t\geq0$.
\end{definition} 


\begin{definition}\label{dfn:(L)ISS-Lyapunov function}
A continuous function $V:\mathbb{R}_+\times X\to\mathbb{R}_+$ is called an \textit{ISS-Lyapunov function} for $\Sigma$, if $\exists\psi_{1},\psi_{2}\in\mathcal{K}_\infty,\chi\in\mathcal{K}$, and $W\in\mathcal{P}$, such that:
\begin{itemize}
    \item[(i)] $\psi_1(\|x\|_X)\leq V(t,x)\leq\psi_2(\|x\|_X), ~\forall x\in X$
    \item[(ii)] $\forall x\in X,\forall u\in U_c$ with $u(0)=\xi\in U$ it holds:
    \begin{equation*}
        \|x\|_X\geq\chi(\|\xi\|_U) \Rightarrow \dot{V}(t,x)\leq-W(\|x\|_X).
    \end{equation*}
\end{itemize}
\end{definition}

\begin{theorem}\label{thm:ISS-Lyapunov function}
If $\Sigma$ admits an ISS-Lyapunov function, then it is ISS.
\end{theorem}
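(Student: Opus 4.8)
The plan is to reconstruct the ISS estimate of Definition \ref{dfn:(L)ISS} directly from the dissipation inequality in Definition \ref{dfn:(L)ISS-Lyapunov function}(ii), along the standard Lyapunov-to-estimate route. Fix an input $u\in U_c$, write $\bar u(t):=\sup_{0\leq s\leq t}\|u(s)\|_U$, and along a trajectory $x(t)=\phi(t,x_0,u)$ track the scalar function $v(t):=V(t,x(t))$. Since the setting is possibly infinite-dimensional and $V$ need not be smooth, I would interpret the $\dot V$ appearing in (ii) as the upper-right Dini derivative of $v$, use that $t\mapsto v(t)$ is locally absolutely continuous, and apply (ii) along the trajectory via the time-shift (cocycle) property of $\phi$, so that at each time the current input value $u(\tau)$ plays the role of $\xi$. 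The uniform (conservative) threshold $\chi(\bar u(t))$ then dominates each instantaneous threshold $\chi(\|u(\tau)\|_U)$ for $\tau\leq t$.

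The first step is to turn the conditional decrease into a scalar differential inequality purely in $v$. Set $\rho:=\psi_2\circ\chi$. Using the sandwich bound (i), $v(\tau)\geq\rho(\bar u(t))$ forces $\|x(\tau)\|_X\geq\psi_2^{-1}(v(\tau))\geq\chi(\bar u(t))$, so (ii) applies and gives a strict decrease; the aim is to produce an $\alpha\in\mathcal{K}$ with $\dot v(\tau)\leq-\alpha(v(\tau))$ on this region. Manufacturing $\alpha$ is the one genuinely delicate point, and I expect it to be the main obstacle: because $W$ is only positive definite (class $\mathcal{P}$) and need not be monotone, I cannot simply compose $W$ with $\psi_2^{-1}$. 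Instead I would minorize $W(\|x\|_X)$ over the shell $\{\psi_2^{-1}(r)\leq\|x\|_X\leq\psi_1^{-1}(r)\}$ and bound the resulting positive-definite function from below by a genuine class-$\mathcal{K}$ function $\alpha$.

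With $\dot v\leq-\alpha(v)$ available whenever $v\geq\rho(\bar u(t))$, the second step is a comparison argument split into two regimes. Since $\bar u$ is nondecreasing, the sublevel set $\{v\leq\rho(\bar u(t))\}$ is forward invariant: at any attempted upward crossing the decrease condition holds with $\dot v\leq 0$ while the threshold is nondecreasing, which yields the input-dependent, asymptotic-gain part of the estimate. Outside this set, comparing with the autonomous ODE $\dot y=-\alpha(y)$, $y(0)=v(0)$, and invoking the standard fact that its solution map is a class-$\mathcal{KL}$ function $\tilde\beta$, gives $v(t)\leq\tilde\beta(v(0),t)$ until the threshold is reached. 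Combining the two regimes produces $v(t)\leq\tilde\beta(v(0),t)+\rho(\bar u(t))$ for all $t\geq0$.

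Finally, I would translate back to the state norm. Applying $\psi_1^{-1}$ to (i) and using the elementary subadditivity $\sigma(a+b)\leq\sigma(2a)+\sigma(2b)$ valid for any $\sigma\in\mathcal{K}$, I obtain
\begin{equation*}
\|x(t)\|_X\leq\psi_1^{-1}\big(v(t)\big)\leq\beta(\|x_0\|_X,t)+\gamma\big(\bar u(t)\big),
\end{equation*}
with $\beta(r,t):=\psi_1^{-1}\!\big(2\tilde\beta(\psi_2(r),t)\big)\in\mathcal{KL}$ and $\gamma:=\psi_1^{-1}\!\big(2\rho(\cdot)\big)\in\mathcal{K}$, which is precisely the estimate of Definition \ref{dfn:(L)ISS}. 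The only remaining checks are routine: that $\beta$ inherits the $\mathcal{KL}$ property and $\gamma$ the $\mathcal{K}$ property as compositions of monotone comparison functions.
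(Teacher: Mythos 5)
The paper itself contains no proof of this theorem: it is quoted as a known result for (possibly infinite-dimensional) deterministic systems with a citation to \cite{dashkovskiy2013input}, so there is no internal proof to compare against. Your reconstruction follows the standard Lyapunov-to-ISS route, which is essentially the argument of that reference: the Dini-derivative reading of $\dot V$ along trajectories, the use of condition (ii) via the cocycle property with $\xi=u(\tau)$, the conservative threshold $\rho=\psi_2\circ\chi$ dominating each instantaneous threshold, the forward invariance of the sublevel set $\{v\leq\rho(\bar u(t))\}$, the comparison with $\dot y=-\alpha(y)$ to extract a class-$\mathcal{KL}$ bound, and the final untangling through $\psi_1^{-1}$ with $\sigma(a+b)\leq\sigma(2a)+\sigma(2b)$ are all correct and standard.

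There is one technical slip, precisely at the step you flag as delicate. The minorant $\hat\alpha(r)=\inf\{W(s):\psi_2^{-1}(r)\leq s\leq\psi_1^{-1}(r)\}$ is positive definite, but in general it admits \emph{no} class-$\mathcal{K}$ lower bound on all of $\mathbb{R}_+$: since $W\in\mathcal{P}$ may decay to zero at infinity, $\hat\alpha$ may too, whereas any class-$\mathcal{K}$ function is strictly increasing and hence eventually exceeds it. Two standard repairs: (a) drop the class-$\mathcal{K}$ requirement and invoke the comparison lemma for $\dot v\leq-\hat\alpha(v)$ with merely continuous positive definite right-hand side (e.g. Khalil's Lemma 4.4, or the Lin--Sontag--Wang construction), which directly yields the class-$\mathcal{KL}$ function $\tilde\beta$ you need; or (b) use that in the regime where the decrease condition is active $v$ is nonincreasing, so only $r\in[0,v(0)]$ matters, and on a compact interval a class-$\mathcal{K}$ minorant does exist --- but then you must check that the resulting estimate is still jointly class $\mathcal{KL}$ in $(v(0),t)$, uniformity that option (a) gives for free. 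With that correction, your proof is complete and matches the cited source's approach.
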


Now we introduce ISS for finite-dimensional stochastic systems \cite{huang2009input}. 
Consider a stochastic differential equation (SDE) on $\mathbb{R}^n$:
\begin{align}\label{eq:SDE}
\begin{split}
    &dx=f(x,t,u)dt+g(x,t,u)dW_t, \quad t\in\mathbb{R}_+ \\
    &x(0)=x_0\in X\subset\mathbb{R}^n,
\end{split}    
\end{align}
where $f:X\times\mathbb{R}_+\times U\to\mathbb{R}^n,g:X\times\mathbb{R}_+\times U\to\mathbb{R}^{n\times m}$ are sufficiently smooth, and $W_t$ is an $m$-dimensional Wiener process.
For any $V(x,t)\in C^{2,1}(\mathbb{R}^n\times\mathbb{R}_+;\mathbb{R}_+)$, define the differential operator $L$:
\begin{equation*}
    LV(x,t)=V_t+V_xf+\frac{1}{2}\operatorname{Tr}(g^TV_{xx}g).
\end{equation*}

\begin{definition}
System \eqref{eq:SDE} is called \textit{$p$-th moment ISS ($p$-ISS)} if $\exists\beta\in\mathcal{KL},\gamma\in\mathcal{K}_\infty$ such that
\begin{equation*}
\operatorname{E}[\|x(t)\|^p]\leq\beta(\|x_0\|^p,t)+\gamma\Big(\sup_{0\leq s\leq t}\|u(s)\|\Big),
\end{equation*}
$\forall x_0\in X,\forall u\in U_c$ and $\forall t\geq0$.
\end{definition}

\begin{definition}
$V\in C^{2,1}(X\times\mathbb{R}_+;\mathbb{R}_+)$ is called an \textit{$p$-ISS-Lyapunov function} for system \eqref{eq:SDE}, if $\exists\psi_{1}\in\mathcal{VK}_\infty,\psi_{2}\in\mathcal{K}_\infty,\chi\in\mathcal{K}$, $W\in\mathcal{K}_\infty$, such that:
\begin{itemize}
    \item[(i)] $\psi_1(\|x\|^p)\leq V(x,t)\leq\psi_2(\|x\|^p), ~\forall x\in X$
    \item[(ii)] $LV(x,t)\leq-W(\|x\|^p)+\chi(\|u\|).$
\end{itemize}
\end{definition}

The following theorem is based on Theorem 3.1 in \cite{huang2009input}.

\begin{theorem}\label{thm:p-ISS-Lyapunov function}
If system \eqref{eq:SDE} admits an $p$-ISS-Lyapunov function, then it is $p$-ISS.
\end{theorem}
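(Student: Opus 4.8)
The plan is to follow the Lyapunov route for stochastic ISS, adapting the argument of Theorem~3.1 in \cite{huang2009input}: convert the pointwise dissipation inequality~(ii) into a scalar differential inequality for the expected Lyapunov value, solve it by a comparison argument, and then translate the resulting bound back to $\operatorname{E}[\|x(t)\|^p]$ through the coercivity estimate~(i). First I would apply It\^o's formula to $V(x(t),t)$ along solutions of \eqref{eq:infinite-dimensional SDE}, which gives $dV=LV\,dt+V_xg\,dW_t$. Taking expectations annihilates the stochastic integral, provided its integrand is square-integrable; since the coefficients need not be globally bounded, I would make this step rigorous by localizing with a sequence of stopping times $\tau_k=\inf\{t:\|x(t)\|\ge k\}$, applying Dynkin's formula on $[0,t\wedge\tau_k]$, and passing to the limit using the lower bound in~(i) to control $\operatorname{E}[V]$. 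Writing $U(t):=\operatorname{E}[V(x(t),t)]$, inequality~(ii) then yields $\dot U(t)\le-\operatorname{E}[W(\|x(t)\|^p)]+\chi\big(\sup_{0\le s\le t}\|u(s)\|\big)$.

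Next I would close this into an autonomous scalar inequality. The upper bound $V\le\psi_2(\|x\|^p)$ gives $\|x\|^p\ge\psi_2^{-1}(V)$, hence $W(\|x\|^p)\ge\alpha(V)$ with $\alpha:=W\circ\psi_2^{-1}\in\mathcal{K}_\infty$, so that $\dot U\le-\operatorname{E}[\alpha(V)]+\chi(\sup\|u\|)$. To replace $\operatorname{E}[\alpha(V)]$ by a function of $U$ I would invoke Jensen's inequality, and I expect this to be the main obstacle: for a general $\mathcal{K}_\infty$ gain the inequality $\operatorname{E}[\alpha(V)]\ge\alpha(\operatorname{E}[V])$ can fail, so the convexity built into the Lyapunov pair is essential here. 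The assumption $\psi_1\in\mathcal{VK}_\infty$ is precisely what makes a convex coercivity function available, and following \cite{huang2009input} I would carry this convex function through the estimate so that Jensen applies, arriving at $\dot U\le-\hat\alpha(U)+\chi(\sup\|u\|)$ for a suitable $\hat\alpha\in\mathcal{K}$.

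With a self-contained differential inequality in hand, I would invoke the comparison principle for scalar ODEs with a nondecreasing forcing term to extract a bound $U(t)\le\beta_0(U(0),t)+\tilde\gamma\big(\sup_{0\le s\le t}\|u(s)\|\big)$ with $\beta_0\in\mathcal{KL}$ and $\tilde\gamma\in\mathcal{K}_\infty$. Finally I would translate back to moments of the state: by Jensen and the convexity of $\psi_1$, $\psi_1(\operatorname{E}[\|x\|^p])\le\operatorname{E}[\psi_1(\|x\|^p)]\le U(t)$, so $\operatorname{E}[\|x(t)\|^p]\le\psi_1^{-1}(U(t))$. Since $\psi_1^{-1}$ is concave with $\psi_1^{-1}(0)=0$, it is subadditive, which lets me split the sum and set $\beta(r,t):=\psi_1^{-1}\big(\beta_0(\psi_2(r),t)\big)$ and $\gamma(r):=\psi_1^{-1}\big(\tilde\gamma(r)\big)$; using $U(0)=V(x_0,0)\le\psi_2(\|x_0\|^p)$ then gives $\operatorname{E}[\|x(t)\|^p]\le\beta(\|x_0\|^p,t)+\gamma\big(\sup_{0\le s\le t}\|u(s)\|\big)$ with $\beta\in\mathcal{KL}$ and $\gamma\in\mathcal{K}_\infty$, which is exactly $p$-ISS. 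Beyond the Jensen/convexity step, the only other point needing care is the localization that legitimizes taking expectations in It\^o's formula when the diffusion coefficient is unbounded.
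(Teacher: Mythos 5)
Your attempt should first be set against the fact that the paper does not prove this theorem at all: it imports it by citation (``based on Theorem 3.1 in \cite{huang2009input}''), so the comparison is with the standard Lyapunov argument you are reconstructing. Most of your steps are sound and are indeed the expected route: the localization by stopping times $\tau_k$ and Dynkin's formula to legitimize $\dot U(t)\leq-\operatorname{E}[W(\|x(t)\|^p)]+\chi\big(\sup_{0\leq s\leq t}\|u(s)\|\big)$ for $U(t)=\operatorname{E}[V(x(t),t)]$, the ISS comparison lemma for the scalar inequality, and the final translation via Jensen, $\psi_1(\operatorname{E}[\|x\|^p])\leq\operatorname{E}[\psi_1(\|x\|^p)]\leq U(t)$, together with subadditivity of the concave $\psi_1^{-1}$ --- this last part is exactly where the hypothesis $\psi_1\in\mathcal{VK}_\infty$ is genuinely consumed.

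There is, however, a real gap at the step you yourself flag, and your proposed resolution does not work. To close the differential inequality you need $\operatorname{E}[\rho(V)]\geq\hat\alpha(\operatorname{E}[V])$ with $\rho:=W\circ\psi_2^{-1}$, and Jensen requires convexity of $\rho$, not of $\psi_1$; the convexity of $\psi_1$ enters only in the final translation step where you already use it, and it cannot be ``carried through'' to this estimate. Under the stated hypotheses $W,\psi_2\in\mathcal{K}_\infty$, the composition $\rho$ can be sublinear (e.g.\ behave like $\sqrt{r}$), and then no bound $\operatorname{E}[\rho(V)]\geq\hat\alpha(\operatorname{E}[V])$ with $\hat\alpha\in\mathcal{K}$ can hold for arbitrary distributions: taking $V=M$ with probability $c/M$ and $V=0$ otherwise keeps $\operatorname{E}[V]=c$ while $\operatorname{E}[\rho(V)]=c\,\rho(M)/M\to0$ as $M\to\infty$; likewise a sublinear $\rho$ admits no convex $\mathcal{K}_\infty$ minorant, since any convex unbounded minorant must eventually grow at least linearly. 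So the inequality $\dot U\leq-\hat\alpha(U)+\chi(\cdot)$ cannot be reached from the definition as printed by Jensen alone. The standard repair --- and plausibly what the hypotheses of Theorem 3.1 in \cite{huang2009input} actually encode (note the paper defines the class $\mathcal{CK}_\infty$ of concave gains and then never uses it, a telltale of a loose transcription) --- is to impose convexity on the dissipation rate and concavity on the upper coercivity bound: if $W$ is convex and $\psi_2\in\mathcal{CK}_\infty$, then with $m(t):=\operatorname{E}[\|x(t)\|^p]$ Jensen gives $\operatorname{E}[W(\|x\|^p)]\geq W(m)$ and $U\leq\operatorname{E}[\psi_2(\|x\|^p)]\leq\psi_2(m)$, hence $\dot U\leq-W(\psi_2^{-1}(U))+\chi(\cdot)$, after which your comparison and translation steps go through verbatim; alternatively one assumes the dissipation directly as $LV\leq-\alpha_3(V)+\chi(\|u\|)$ with $\alpha_3$ convex or linear. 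As written, your middle paragraph asserts a Jensen step that neither your argument nor the paper's stated hypotheses license.
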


\section{Problem formulation} \label{section:problem formulation}

We study the density control problem for a family of $N$ (heterogeneous) strict-feedback stochastic systems given by:
\begin{align}\label{eq:strict-feedback system}
\begin{split}
    & dx^i = v^i(x^i,t)dt+g_1(x^i,t)dW_t^i,\quad i=1,\dots,N\\
    & dv^i = u^idt+g_2^i(x^i,v^i,t)dW_t^i,
\end{split}
\end{align}
where 
\begin{align*}
    &x^i,v^i\in\mathbb{R}^{n}\text{: states of the $i$-th system};\\
    &W_t^i\in\mathbb{R}^m\text{: standard Wiener processes independent across $i$};\\
    &u^i\text{: control input of the $i$-th system};\\
    &g_1,g_2^i\in\mathbb{R}^{n\times m}\text{: matrix-valued $C^2$ and bounded functions}.
\end{align*}

Throughout this work, the superscription $i$ is reserved to represent the $i$-th system.
Note that we only require the upper system of \eqref{eq:strict-feedback system} to be homogeneous for different $i$.
The probability density of $\{x^i\}_{i=1}^N$ is given by 
\begin{align}\label{eq:mean-field density}
    p(x,t)\approx\frac{1}{N}\sum_{i=1}^N\delta_{X_t^i},
\end{align}
with $\delta_{x}$ being the Dirac distribution.
The control objective is to design $\{u^i\}_{i=1}^N$ such that $p$ converges to a target density.

\begin{remark}
For clarity, we restrict our attention to the case when \eqref{eq:strict-feedback system} has two stages.
However, the backstepping design to be presented generalizes to systems with more stages:
\begin{align}
    & dx^i = v_1^i(x^i,t)dt+g_1(x^i,t)dW_t^i, \quad i=1,\dots,N \label{eq:top of cascade}\\
    & dv_l^i = v_{l+1}^idt+g_{l+1}^i(x^i,\bar{v}_l^i,t)dW_t^i,\quad l=1,\dots,M_i \label{eq:middle of cascade}\\
    & dv_{M_i+1}^i = u^idt+g_{M_i+1}^i(x^i,\bar{v}_{M_i+1}^i,t)dW_t^i, \label{eq:bottom of cascade}
\end{align}
where $\bar{v}_l^i=[(v_1^i)^T,\dots,(v_l^i)^T]^T$ and $M_i+1$ is the length of stages of the $i$-th system.
The objective is then to design $\{u^i\}_{i=1}^N$ to stabilize the density of $\{x^i\}_{i=1}^N$ (the states of the first stage).
Again, only \eqref{eq:top of cascade} is required to be homogeneous.
The remaining stages \eqref{eq:middle of cascade}-\eqref{eq:bottom of cascade} can be heterogeneous with different length.
This is because in our backstepping design, only the first step is identical for all systems.
The remaining steps are performed independently for different systems.
The generalization will be made clear later.
\end{remark}

Resume our discussion on \eqref{eq:strict-feedback system}.
We treat the collection of states $\{x^i\}_{i=1}^N$ as being driven by the same continuous vector field $v(x,t)$ such that $v(x^i(t),t)=v^i(x^i(t),t)$ for all $t$.
In other words, when projecting onto the trajectory of the $i$-th system, $v$ coincides with $v^i$.
In this case, the lower equation of \eqref{eq:strict-feedback system} is understood as the time differential of $v^i(x^i(t),t)$ along the trajectory $x^i(t)$.
Note that we implicitly assume that $v^i(x^i(t),t)=v^j(x^j(t),t)$ when $x^i(t)=x^j(t)$ for all $i\neq j$ and all $t$.
This is a mild assumption.
Notice that $x^i$ are all stochastic processes and the probability of $x^i=x^j$ for some $j$ is 0.
Hence, even if the above assumption is violated, it will not cause any problem to the analysis.

We confine $\{x^i\}_{i=1}^N$ within a convex bounded domain $\Omega\subset\mathbb{R}^{n}$ with a Lipschitz boundary $\partial\Omega$.
Denote $\Omega_T=\Omega\times(0,T)$, $x=[x_1,\dots,x_n]^T$, and $v=[v_1,\dots,v_{n}]^T$.
Define
\begin{align*}
    &\sigma:=[\sigma_{jk}]_{n\times n}=\frac{1}{2}g_1g_1^{T},\\
    &\nabla_\sigma p:=\big[\sum_{k=1}^{n}\partial_{x_k}(\sigma_{1k}p),\dots,\sum_{k=1}^{n}\partial_{x_k}(\sigma_{nk}p)\big]^T.
\end{align*}

In the mean-field limit as $N\to\infty$, the mean-field density $p$ satisfies a Fokker-Planck equation given by \cite{bensoussan2013mean}:
\begin{align}\label{eq:FPK equation}
\begin{split}
    &\partial_tp=-\sum_{j=1}^{n} \partial_{x_j}(v_jp)+\sum_{j,k=1}^{n} \partial_{x_j}\partial_{x_k}(\sigma_{jk}p) \quad\text{in}\quad\Omega_T\\
    &~\quad=-\nabla\cdot(vp-\nabla_\sigma p),\\
    &p=p_0 \quad\text{on}\quad\Omega\times\{0\},\\
    &\mathbf{n}\cdot(vp-\nabla_\sigma p)=0 \quad\text{on}\quad \partial\Omega\times(0,T),
\end{split}
\end{align} 
where $p_0$ is the initial density and $\mathbf{n}$ is the outward normal to $\partial\Omega$. 
The last equation is the \textit{reflecting boundary condition}. 

Our control problem is stated as follows.
\begin{problem}[Density control]
Consider systems \eqref{eq:strict-feedback system} and the mean-field density \eqref{eq:mean-field density} which satisfies \eqref{eq:FPK equation}.
Given a target density $p_*$, design $\{u^i\}_{i=1}^N$ such that $p\to p_*$.
\end{problem}

\section{Backstepping density control}
\label{section:backstepping density control}

In backstepping design, a sequence of stabilizing functions are recursively constructed \cite{krstic1995nonlinear}.
Backstepping design for stochastic systems has been widely studied \cite{deng1997stochastic, liu2007decentralized}.
The major novelty of this work is that instead of stabilizing each $x^i$ in the individual level, we aim to stabilize $p$, the density of $\{x^i\}_{i=1}^N$, in the macroscopic level.
This non-classical control objective requires new backstepping design algorithms beyond \cite{deng1997stochastic, liu2007decentralized}.

The first step of backstepping design is identical for all systems.
Given a smooth target density $p_*(x)>0$, define $\tilde{p}(x,t)=p(x,t)-p_*(x)$.
Let $v_d$ be a virtual stabilizing control law for \eqref{eq:FPK equation} (or the upper system of \eqref{eq:strict-feedback system}) and $V_1(t)=\int_\Omega\phi(\tilde{p}(x,t))dx\geq0$ be a Lyapunov certificate with $\phi:\mathbb{R}\to\mathbb{R}$ a $C^1$ function such that
\begin{align}
    \frac{dV_1}{dt}\Bigm|_{v\equiv v_d}&=\int_\Omega-W(\tilde{p}(x,t))dx\leq0
\end{align}
for some function $W:\mathbb{R}\to\mathbb{R}_+$.
The design of $v_d$ is the subject of density control and has been increasingly studied in recent years \cite{de2018optimal, krishnan2018transport, elamvazhuthi2018bilinear, zheng2021transporting}. 
One example is given by \cite{zheng2021transporting}:
\begin{align} \label{eq:density feedback law}
    &v_d = -\frac{\alpha\nabla(p-p_*)-\nabla_\sigma p}{p},
\end{align}
where $\alpha>0$ is a constant.
We can choose $V_1=\int_\Omega\frac{1}{2}\tilde{p}^2dx$ which satisfies $\frac{dV_1}{dt}|_{v\equiv v_d}=\int_\Omega-k_p\tilde{p}^2dx$ for some $k_p>0$.

\begin{remark}
Control laws like \eqref{eq:density feedback law} are called density feedback because they explicitly depend on $p$.
The density $p$ can be estimated by classical density estimation algorithms, or the recently proposed density filtering algorithms \cite{zheng2020pde, zheng2021distributedmean}.
\end{remark}

The remaining steps of backstepping design proceed independently for different systems.
Along the trajectory $x^i(t)$, define $\tilde{v}^i(x^i(t),t)=v^i(x^i(t),t)-v_d(x^i(t),t)$.
Now, define a continuous function $\tilde{v}:\Omega_T\to\mathbb{R}^n$ such that $\tilde{v}(x^i(t),t)=\tilde{v}^i(x^i(t),t)$ and $\|\tilde{v}(x,t)\|\leq\max_i\|\tilde{v}^i(x^i(t),t)\|$ for all $(x,t)\in\Omega_T$.
By definition, $\tilde{v}$ is not unique.
However, we are only interested in its values on the trajectories $\{x^i(t)\}_{i=1}^N$, which are unique.
Using $\tilde{p}=p-p_*$, we rewrite \eqref{eq:FPK equation} as
\begin{align}\label{eq:density error equation}
    \partial_t\tilde{p} = -\nabla\cdot(v_dp-\nabla_\sigma p+\tilde{v}p),
\end{align}

By It\^o's lemma, we have, along $\{x^i(t)\}_{i=1}^N$, 
\begin{align}\label{eq:velocity error equation}
\begin{split}
    d\tilde{v}^i =& (u^i-\partial_tv_d-\partial_xv_dv^i-G)dt\\
    &+ (g_2^i-\partial_xv_dg_1)dW_t^i,
\end{split}    
\end{align}
where
\begin{equation*}
    \partial_xv_d=\Big[\frac{\partial v_{d,j}}{\partial x_k}\Big]_{n\times n}, \quad
    G =
    \begin{bmatrix}
    \frac{1}{2}\operatorname{Tr}\Big(g_1^T\big[\frac{\partial^2v_{d,1}}{\partial x_j\partial x_k}\big]_{n\times n}g_1\Big)\\
    \vdots\\
    \frac{1}{2}\operatorname{Tr}\Big(g_1^T\big[\frac{\partial^2v_{d,n}}{\partial x_j\partial x_k}\big]_{n\times n}g_1\Big)
    \end{bmatrix}.
\end{equation*}

Equations \eqref{eq:density error equation} and \eqref{eq:velocity error equation} constitute a composite system where the PDE \eqref{eq:density error equation} is distributedly driven by a collection of finite-dimensional SDEs \eqref{eq:velocity error equation}.
Now our goal is to design $\{u^i\}_{i=1}^N$ such that $(\tilde{p},\tilde{v}^i)\to0$.
For this purpose, consider the following augmented Lyapunov function:
\begin{align}\label{eq:augmented Lyapunov}
\begin{split}
    V_2^i(t) & =V_1(t)+\int_\Omega\frac{1}{4}\|\tilde{v}^i(x^i(t),t)\|^4dx, \\
    & =\int_\Omega\phi(\tilde{p}(x,t))+\frac{1}{4}\|\tilde{v}^i(x^i(t),t)\|^4dx.
\end{split}
\end{align}

We discuss how to apply the differential operator $L$ to $V_2^i$.
First, since $\tilde{p}$ satisfies a deterministic equation, $LV_1$ becomes $\frac{dV_1}{dt}$.
Next, note that $\tilde{v}^i(x^i(t),t)$ is only a function of $t$.
The term $\int_\Omega\|\tilde{v}^i(x^i(t),t)\|^4dx$ simply copies the value of $\tilde{v}^i(x^i(t),t)$ to all $x\in\Omega$ for all $t$, i.e., $\int_\Omega\|\tilde{v}^i(x^i(t),t)\|^4dx=|\Omega|\|\tilde{v}^i(x^i(t),t)\|^4$, where $|\Omega|$ is the Lebesgue measure of $\Omega$.
As a result, we can apply $L$ directly inside the integral, i.e., $L\int_\Omega\|\tilde{v}^i\|^4dx=\int_\Omega L\|\tilde{v}^i\|^4dx$.
Then we have
\begin{align*}
    LV_2^i =&\int_\Omega\phi'(\tilde{p})\partial_tp + \|\tilde{v}^i\|^2(\tilde{v}^i)^T(u^i-\partial_tv_d-\partial_xv_dv^i-G) \\
    &+\frac{1}{2}\operatorname{Tr}\big((g_2^i-\partial_xv_dg_1)^T(2\tilde{v}^i (\tilde{v}^i)^T\\
    &\qquad\qquad+\|\tilde{v}^i\|^2)(g_2^i-\partial_xv_dg_1)\big)dx,
\end{align*}
where for the first term, by the divergence theorem and Young's inequality, we have
\begin{align*}
\begin{split}
    &\quad\int_\Omega\phi'(\tilde{p})\partial_tpdx \\
    &=\int_\Omega-\phi'(\tilde{p})\nabla\cdot(v_dp-\nabla_\sigma p+\tilde{v}^ip)dx \\
    &=\int_\Omega\nabla\phi'(\tilde{p})\cdot(v_dp-\nabla_\sigma p+\tilde{v}^ip)dx \\
    &\leq\int_\Omega-W(\tilde{p}) + (\tilde{v}^i)^Tp\nabla\phi'(\tilde{p})dx\\
    &\leq\int_\Omega-W(\tilde{p}) + \frac{1}{4}\Big\|\frac{(\tilde{v}^i)^Tp\nabla\phi'(\tilde{p})}{\epsilon_1(t)}\Big\|^4+\frac{3}{4}\epsilon_1(t)^{4/3}dx\\
    &\leq\int_\Omega-W(\tilde{p}) + \frac{\|\tilde{v}^i\|^4\|p\nabla\phi'(\tilde{p})\|^4}{4\epsilon_1(t)^4}+\frac{3\epsilon_1(t)^{4/3}}{4}dx,
\end{split}
\end{align*}
for any function $\epsilon_1(t)>0$, and for the last term, by Young's inequality, we have for all $t\geq0$,
\begin{align}
    &\quad\operatorname{Tr}\big((g_2^i-\partial_xv_dg_1)^T(2\tilde{v}^i (\tilde{v}^i)^T+\|\tilde{v}^i\|^2)(g_2^i-\partial_xv_dg_1)\big) \nonumber\\
    &=3\|\tilde{v}^i\|^2\operatorname{Tr}\big((g_2^i-\partial_xv_dg_1)^T(g_2^i-\partial_xv_dg_1)\big) \label{eq:young's inequality epsilon_2}\\
    &\leq \frac{3}{2}\Big(\frac{\|\tilde{v}^i\|^4}{\epsilon_2(t)^2}\operatorname{Tr}\big((g_2^i-\partial_xv_dg_1)^T(g_2^i-\partial_xv_dg_1)\big)^2+\epsilon_2(t)^2\Big),\nonumber
\end{align}
for any function $\epsilon_2(t)>0$.
Hence,
\begin{align*}
    LV_2^i &\leq\int_\Omega-W(\tilde{p}) +\frac{3}{4}(\epsilon_1^{4/3}+\epsilon_2^2)\\
    &\quad+ \|\tilde{v}^i\|^2(\tilde{v}^i)^T\Big[\frac{\|p\nabla\phi'(\tilde{p})\|^4}{4\epsilon_1^4}\tilde{v}^i+u^i-\partial_tv_d-\partial_xv_dv^i\\
    &\quad-G+\frac{3\tilde{v}^i}{4\epsilon_2^2}\operatorname{Tr}\big((g_2^i-\partial_xv_dg_1)^T(g_2^i-\partial_xv_dg_1)\big)^2\Big]dx.
\end{align*}

In the preceding design, $\epsilon_1(t)$ and $\epsilon_2(t)$ are introduced so that in the final form of $LV_2^i$, all the undesired terms have a common factor $\|\tilde{v}^i\|^2(\tilde{v}^i)^T$. 
If we design $u^i$ as: 
\begin{align}\label{eq:backstepping control law}
\begin{split}
    u^i &= -k\tilde{v}^i+\partial_tv_d+\partial_xv_dv^i+G-\frac{\|p\nabla\phi'(\tilde{p})\|^4}{4\epsilon_1^4}\tilde{v}^i\\
    &\quad-\frac{3\tilde{v}^i}{4\epsilon_2^2}\operatorname{Tr}\big((g_2^i-\partial_xv_dg_1)^T(g_2^i-\partial_xv_dg_1)\big)^2,
\end{split}
\end{align}
where $k>0$ is a selected constant, then $LV_2^i$ becomes:
\begin{align}\label{eq:LV2}
    LV_2^i &\leq\int_\Omega-W(\tilde{p})-k\|\tilde{v}^i\|^4 + \frac{3}{4}(\epsilon_1^{4/3}+\epsilon_2^2)dx. 
\end{align}
Substituting \eqref{eq:backstepping control law} into \eqref{eq:velocity error equation}, we obtain the closed-loop system:
\begin{align}
    &\partial_t\tilde{p} = -\nabla\cdot(v_dp-\nabla_\sigma p+\tilde{v}p) \label{eq:error system p},\\
    \begin{split}\label{eq:error system v}
        &d\tilde{v}^i = -\Big[\frac{3\tilde{v}^i}{4\epsilon_2^2}\operatorname{Tr}\big((g_2^i-\partial_xv_dg_1)^T(g_2^i-\partial_xv_dg_1)\big)^2\\
    &\qquad+\frac{\|p\nabla\phi'(\tilde{p})\|^4}{4\epsilon_1^4}+k\Big]\tilde{v}^i+(g_2^i-\partial_xv_dg_1)dW_t^i, \\
    &\tilde{v}(x^i(t),t)=\tilde{v}^i(x^i(t),t).
    \end{split}
\end{align}

During the backstepping design, $\epsilon_1(t)$ and $\epsilon_2(t)$ are introduced to facilitate the design of $u^i$.
However, they also introduce two extra terms in \eqref{eq:LV2}.
Our idea is to treat $\epsilon_1(t)$ and $\epsilon_2(t)$ as disturbances and establish that $(\tilde{p},\tilde{v}^i)$ are ISS with respect to $(\epsilon_1,\epsilon_2)$.
We believe that the existence of such a Lyapunov function \eqref{eq:augmented Lyapunov} already implies ISS for $(\tilde{p},\tilde{v}^i)$ in certain probabilistic sense.
Unfortunately, we are unaware of an ISS-Lyapunov function theorem like Theorem \ref{thm:p-ISS-Lyapunov function} that is applicable to conclude stability for our system \eqref{eq:error system p}-\eqref{eq:error system v}, which is a deterministic PDE distributedly driven by a family of finite-dimensional SDEs.
The development of such a theorem is left as future work.
Nevertheless, we are able to prove ISS properties of \eqref{eq:error system p}-\eqref{eq:error system v} for a special choice of $v_d$ given by \eqref{eq:density feedback law}.
This is based on the observation that the closed-loop system \eqref{eq:error system p}-\eqref{eq:error system v} resembles a cascade system.
Hence, we can use Theorem \ref{thm:ISS-Lyapunov function} to prove ISS for \eqref{eq:error system p} and use Theorem \ref{thm:p-ISS-Lyapunov function} for \eqref{eq:error system v} separately, and combine them to obtain the final result.
We should assume $\epsilon_1(t),\epsilon_2(t)\geq c>0$ for some constant $c$ because otherwise $u_i$ would be unbounded.

\begin{theorem}
Consider system \eqref{eq:error system p}-\eqref{eq:error system v}.
Let $v_d$ be given by \eqref{eq:density feedback law}.
Assume $\epsilon_1(t),\epsilon_2(t)\geq c>0$ for some constant $c$.
Then there exist constants $\beta_j,\gamma_j>0,j=1,2,3$, such that
\begin{align*}
    \operatorname{E}[\|\tilde{v}^i(t)\|^4]\leq & \|\tilde{v}^i(0)\|^4e^{-\beta_1t}+\gamma_1\epsilon_2(t)^2,~\forall i, \\
    \operatorname{E}[\|\tilde{p}(x,t)\|_{L^2}]
    \leq & \|\tilde{p}(x,0)\|_{L^2}e^{-\beta_2t}+\max_i\|\tilde{v}^i(0)\|^2e^{-\beta_3t} \\
    & +\gamma_2\epsilon_1(t)+\gamma_3\epsilon_2(t).
\end{align*}
\end{theorem}

\begin{proof}
For \eqref{eq:error system p}, consider a Lyapunov function $V_1(t)=\int_\Omega\frac{1}{2}\tilde{p}^2dx$.
By the divergence theorem, Poincar\'e's inequality, and the fact that $\int_\Omega\tilde{p}dx=0$, we have
\begin{align*}
    \frac{dV_1}{dt}&=\int_\Omega-\alpha(\nabla\tilde{p})^2+p\nabla\tilde{p}\cdot\tilde{v}dx\\
    &\leq\int_\Omega-\alpha(\nabla\tilde{p})^2+\|\nabla\tilde{p}\|\|p\tilde{v}\|dx\\
    &\leq-\alpha(1-\theta)\|\nabla\tilde{p}\|_{L^2}^2-\alpha\theta\|\nabla\tilde{p}\|_{L^2}^2+\|\nabla\tilde{p}\|_{L^2}\|p\tilde{v}\|_{L^2}\\
    &\leq-c_1^2\alpha(1-\theta)\|\tilde{p}\|_{L^2}^2-c_1\alpha\theta\|\nabla\tilde{p}\|_{L^2}\|\tilde{p}\|_{L^2}\\
    &\quad+\|\nabla\tilde{p}\|_{L^2}\|p\tilde{v}\|_{L^2},
\end{align*}
where $\theta\in(0,1)$ and $c_1>0$ is the constant from the Poincar\'e inequality.
By the maximum principle for \eqref{eq:FPK equation}, the solution $p$ is bounded above by a positive constant $c_2$ and we have $\|p\tilde{v}\|_{L^2}\leq c_2\|\tilde{v}\|_{L^2}$.
Hence, if
\begin{equation*}
    \|\tilde{p}\|_{L^2}\geq\frac{c_2}{c_1\alpha\theta}\|\tilde{v}\|_{L^2},
\end{equation*}
then we have
\begin{equation*}
    \frac{dV_1}{dt}\leq-c_1^2\alpha(1-\theta)\|\tilde{p}\|_{L^2}^2.
\end{equation*}
By Theorem \ref{thm:ISS-Lyapunov function}, there exist constants $\lambda_1,\kappa_1>0$ such that
\begin{align*}
    &\quad\|\tilde{p}(x,t)\|_{L^2}\\
    &\leq\|\tilde{p}(x,0)\|_{L^2}e^{-\lambda_1t}+\kappa_1\|\tilde{v}(x,t)\|_{L^2}\\
    &\leq\|\tilde{p}(x,0)\|_{L^2}e^{-\lambda_1t}+\kappa_1\Big(\int_\Omega\frac{\|\tilde{v}(x,t)\|^4}{2\epsilon_1(t)^2}+\frac{\epsilon_1(t)^2}{2}dx\Big)^{\frac{1}{2}}\\
    &\leq\|\tilde{p}(x,0)\|_{L^2}e^{-\lambda_1t}+\kappa_1\Big(\int_\Omega\frac{\|\tilde{v}(x,t)\|^4}{2\epsilon_1(t)^2}dx\Big)^{\frac{1}{2}} \\
    & \quad+\kappa_1(|\Omega|/2)^{\frac{1}{2}}\epsilon_1(t),
\end{align*}
where we used the inequality $\sqrt{a+b}\leq\sqrt{a}+\sqrt{b}$ for $a,b\geq0$.
Taking expectation on both sides and using Jensen's inequality for concave functions, we have
\begin{align}\label{eq:ISS p}
\begin{split}
    & \operatorname{E}[\|\tilde{p}(x,t)\|_{L^2}] \\
    \leq & \|\tilde{p}(x,0)\|_{L^2}e^{-\lambda_1t}+\kappa_1\Big(\int_\Omega\frac{\operatorname{E}[\|\tilde{v}(x,t)\|^4]}{2\epsilon_1(t)^2}dx\Big)^{\frac{1}{2}} \\
    & +\kappa_1(|\Omega|/2)^{\frac{1}{2}}\epsilon_1(t) \\
    \leq & \|\tilde{p}(x,0)\|_{L^2}e^{-\lambda_1t}+\frac{\kappa_1|\Omega|^{\frac{1}{2}}\max_i\operatorname{E}[\|\tilde{v}^i(t)\|^4]^\frac{1}{2}}{\sqrt{2}c} \\
    & +\kappa_1(|\Omega|/2)^{\frac{1}{2}}\epsilon_1(t).
\end{split}
\end{align}
Now we study \eqref{eq:error system v}.
Along the trajectory $x^i(t)$, consider a Lyapunov function $V_3^i(t)=\frac{1}{4}\|\tilde{v}^i(t)\|^4$.
Using \eqref{eq:young's inequality epsilon_2}, we have
\begin{align*}
    LV_3^i&=-\Big[\frac{3}{4\epsilon_2^2}\operatorname{Tr}\big((g_2^i-\partial_xv_dg_1)^T(g_2^i-\partial_xv_dg_1)\big)^2\\
    &\quad+\frac{\|p\nabla\tilde{p}\|^4}{4\epsilon_1^4}+k\Big]\|\tilde{v}^i\|^4\\
    &\quad+\frac{3}{2}\|\tilde{v}^i\|^2\operatorname{Tr}\big((g_2^i-\partial_xv_dg_1)^T(g_2^i-\partial_xv_dg_1)\\
    &\leq-\Big(\frac{\|p\nabla\tilde{p}\|^4}{4\epsilon_1^4}+k\Big)\|\tilde{v}^i\|^4+\frac{3}{4}\epsilon_2^2.
\end{align*}
By Theorem \ref{thm:p-ISS-Lyapunov function}, there exist constants $\lambda_2,\kappa_2>0$ such that
\begin{equation}\label{eq:ISS v}
    \operatorname{E}[\|\tilde{v}^i(t)\|^4]\leq\|\tilde{v}^i(0)\|^4e^{-\lambda_2t}+\kappa_2\epsilon_2(t)^2,~\forall i.
\end{equation}
Combining \eqref{eq:ISS p} and \eqref{eq:ISS v}, we obtain the desired result.
\end{proof}


\begin{remark}
The backstepping design can be generalized to systems in the form of \eqref{eq:top of cascade}-\eqref{eq:bottom of cascade} by augmenting \eqref{eq:augmented Lyapunov} as \begin{equation*}
    V_{M_i}=\int_\Omega\phi(\tilde{p})+\sum_{l=1}^{M_i}\frac{1}{4}\|v_l^i-v_{d,l}^i\|^4dx,
\end{equation*}
where $v_{d,l}^i,1\leq l\leq M_i$ is a sequence of stabilizing functions.
Starting from the third step ($l\geq2$), the design objective is similar with the classic backstepping design for stochastic systems studied in \cite{deng1997stochastic, liu2007decentralized}, where the presented algorithms can be used to address the remaining steps.
\end{remark}

\section{Density control of mobile robots}
\label{section:example}

As an example, we apply backstepping density control to a group of heterogeneous nonholonomic mobile robots.
The pose of a robot is given by $q^i=[x_1^i~x_2^i~\theta^i]^T$ where $x_1^i,x_2^i$ are the coordinates and $\theta^i$ is the orientation.
After adding white noise, the complete motion equations are given by \cite{fierro1997control}:
\begin{align*}
    dq^i= &S^i(q^i)v^idt+f_1(q^i,t)dW_t^i, \quad i=1,\dots,N\\
    M^i(q^i)dv^i= &(-V_m^i(q^i,\dot{q}^i)v^i-F^i(v^i)+\tau^i)dt\\
    &+f_2^i(q^i,v^i,t)dW_t^i,
\end{align*}
where $M^i\in\mathbb{R}^{2\times2}$ are symmetric positive definite inertia matrices, $V_m^i\in\mathbb{R}^{2\times2}$ are the centripetal and coriolis matrices, $F^i\in\mathbb{R}^2$ are the surface frictions, $W_t^i\in\mathbb{R}^m$, $f_1\in\mathbb{R}^{3\times m}$, $f_2^i\in\mathbb{R}^{2\times m}$ are as in \eqref{eq:strict-feedback system}, $\tau^i\in\mathbb{R}^2$ are the inputs, $d^i$ are related to geometric structures, and $S^i$ are given by
\begin{equation*}
S^i(q^i)=
\begin{bmatrix}
\cos\theta^i & -d^i\sin\theta^i \\
\sin\theta^i & d^i\cos\theta^i \\
0 & 1
\end{bmatrix}.
\end{equation*}
All the states are assumed to be available to the controller.

In the density control problem, we are only interested in the density of the positions $\{x^i:=[x_1^i~x_2^i]^T\}$.
So $\theta^i$ will be treated as known parameters.
Let $u^i$ be an auxiliary input.
By applying the nonlinear feedback \cite{fierro1997control}:
\begin{equation}\label{eq:change of control variable}
    \tau^i=M^i(q^i)u^i+V_m^i(q^i,\dot{q}^i)v^i+F^i(v^i)
\end{equation}
and removing the equation of $\theta^i$, we obtain
\begin{align}\label{eq:nonholonomic mobile robot}
\begin{split}
    &dx^i = T^i(\theta^i)v^idt+g_1(x^i,\theta^i)dW_t^i,\\
    &dv^i = u^idt+g_2^i(x^i,v^i,\theta^i)dW_t^i.
\end{split}    
\end{align}
where $T^i$ consists of the first two rows of $S^i$, which is invertible, $g_1$ consists of the first two rows of $f_1$, and $g_2^i=(M^i)^{-1}f_2^i$.
The above equations are in the same form of \eqref{eq:strict-feedback system}.
Hence, the stabilizing density feedback is given by
\begin{equation*}
    v_d=(T^i)^{-1}\Big[-\frac{\alpha(x,t)\nabla(p-p_*)-\nabla_\sigma p}{p}\Big],
\end{equation*}
where $\alpha>0$ can be used by individual robots to adjust their velocity magnitude.
The auxiliary inputs $u^i$ can be computed according to \eqref{eq:backstepping control law}, which then generate the actual input $\tau^i$ for each robot according to \eqref{eq:change of control variable}.
By following $\tau^i$, the density of the robots' positions converge towards a target density.

\begin{figure*}[t]
\setlength{\abovecaptionskip}{0.0cm}
\setlength{\belowcaptionskip}{-0.2cm}
    \centering
    \begin{subfigure}[b]{0.24\textwidth}
        \centering
        \includegraphics[width=\textwidth]{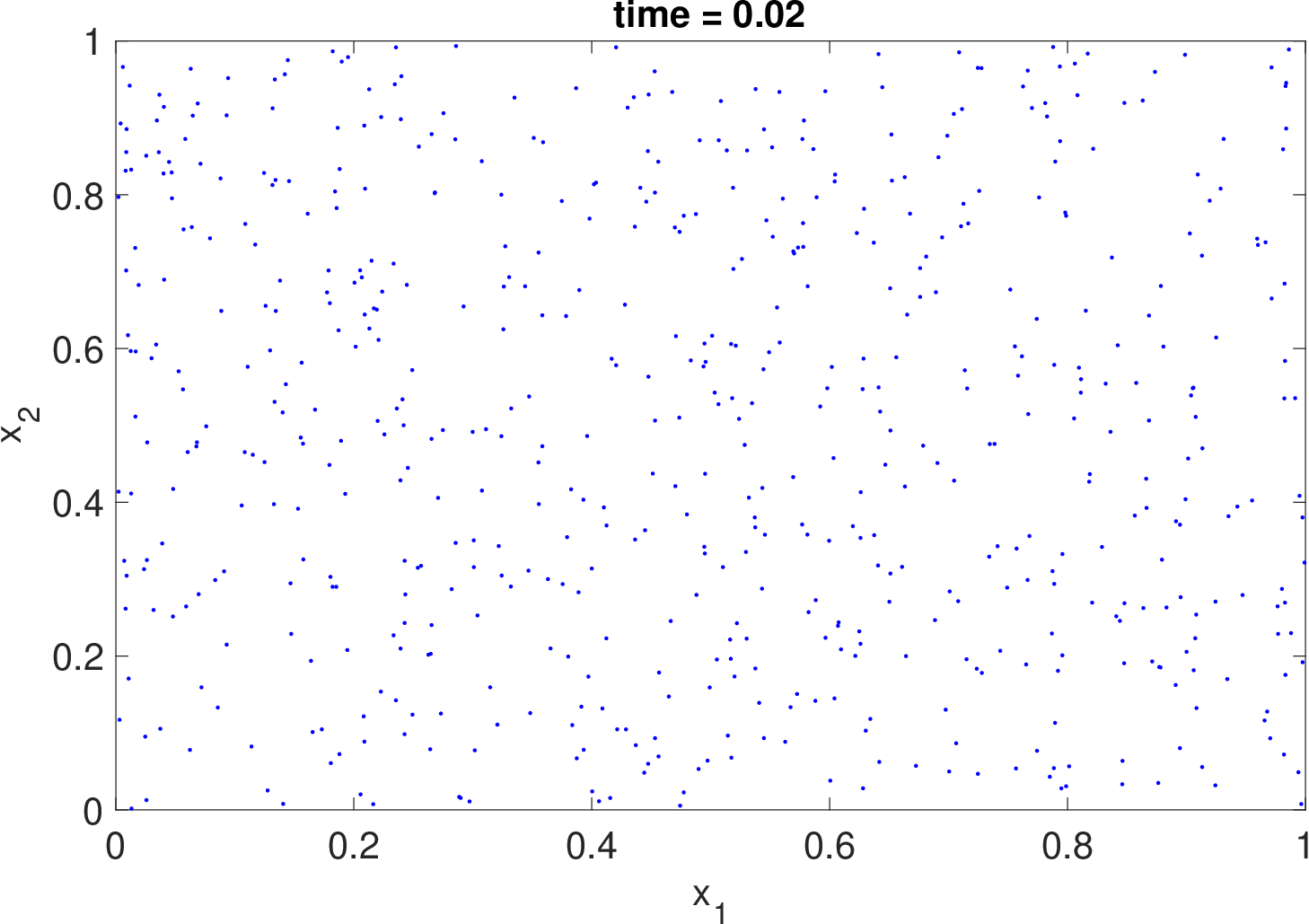}
    \end{subfigure}
    \begin{subfigure}[b]{0.24\textwidth}
        \centering
        \includegraphics[width=\textwidth]{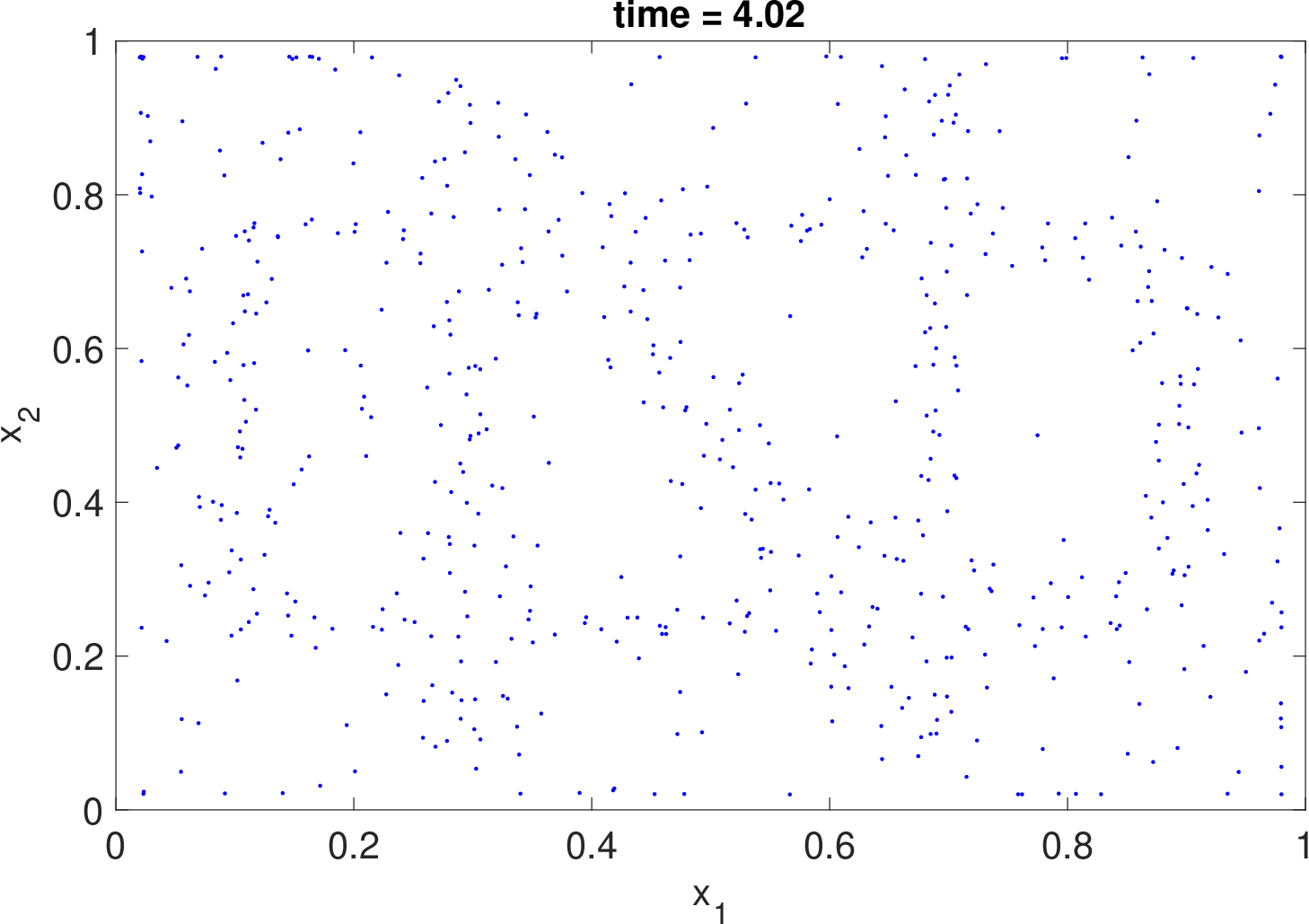}
    \end{subfigure}
    \begin{subfigure}[b]{0.24\textwidth}
        \centering
        \includegraphics[width=\textwidth]{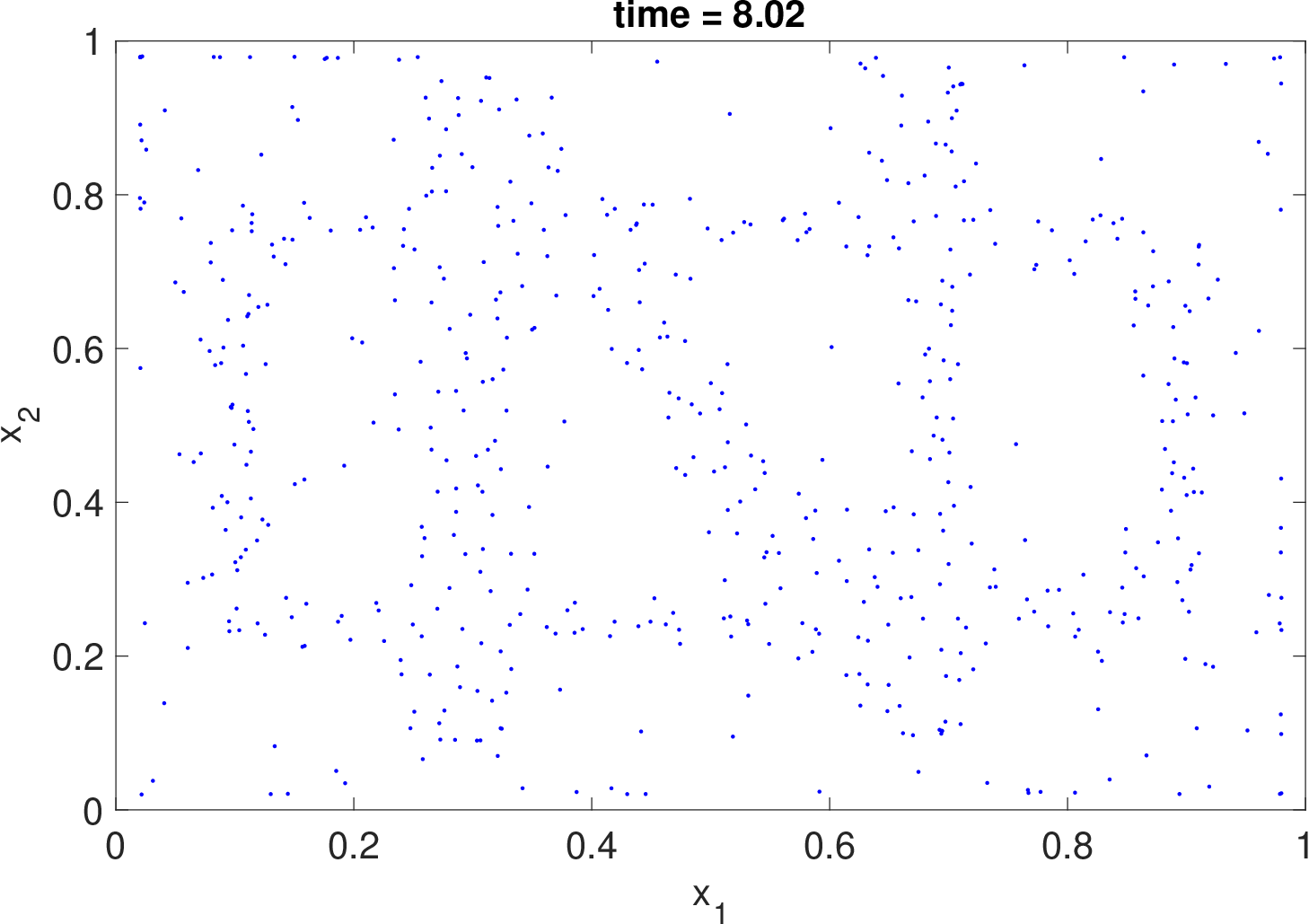}
    \end{subfigure}
    \begin{subfigure}[b]{0.24\textwidth}
        \centering
        \includegraphics[width=\textwidth]{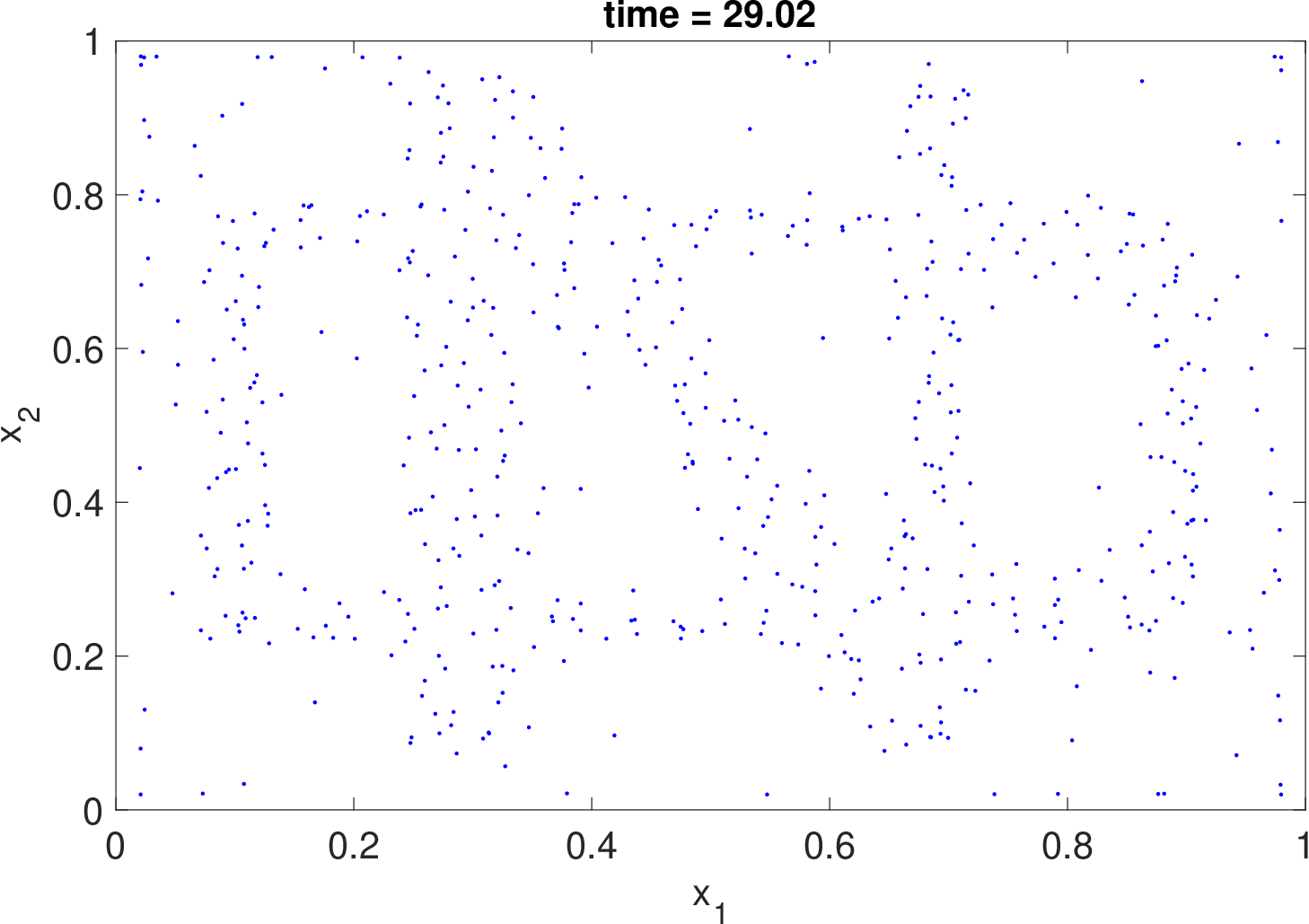}
    \end{subfigure}
    
    \begin{subfigure}[b]{0.24\textwidth}
        \centering
        \includegraphics[width=\textwidth]{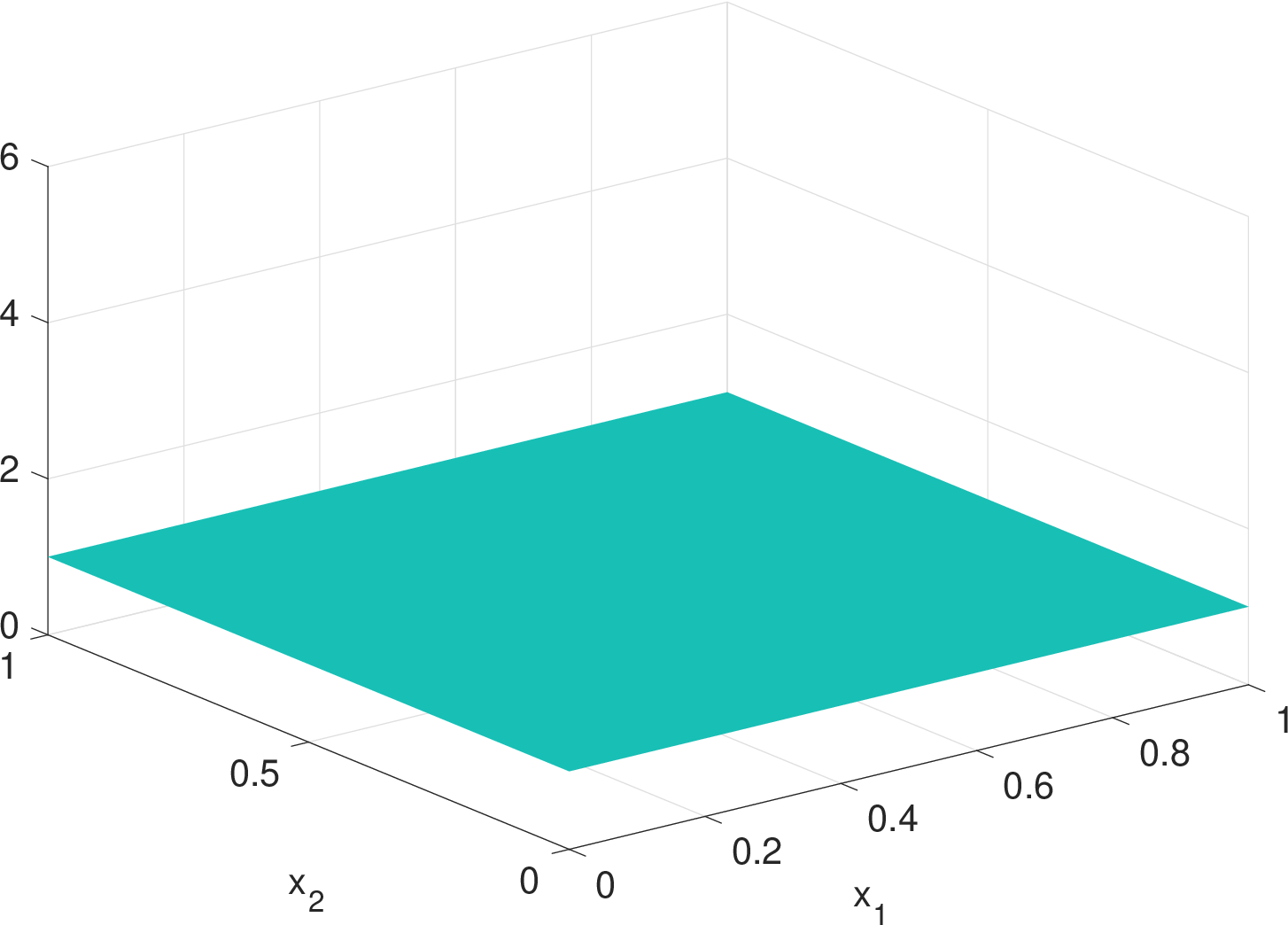}
    \end{subfigure}
    \begin{subfigure}[b]{0.24\textwidth}
        \centering
        \includegraphics[width=\textwidth]{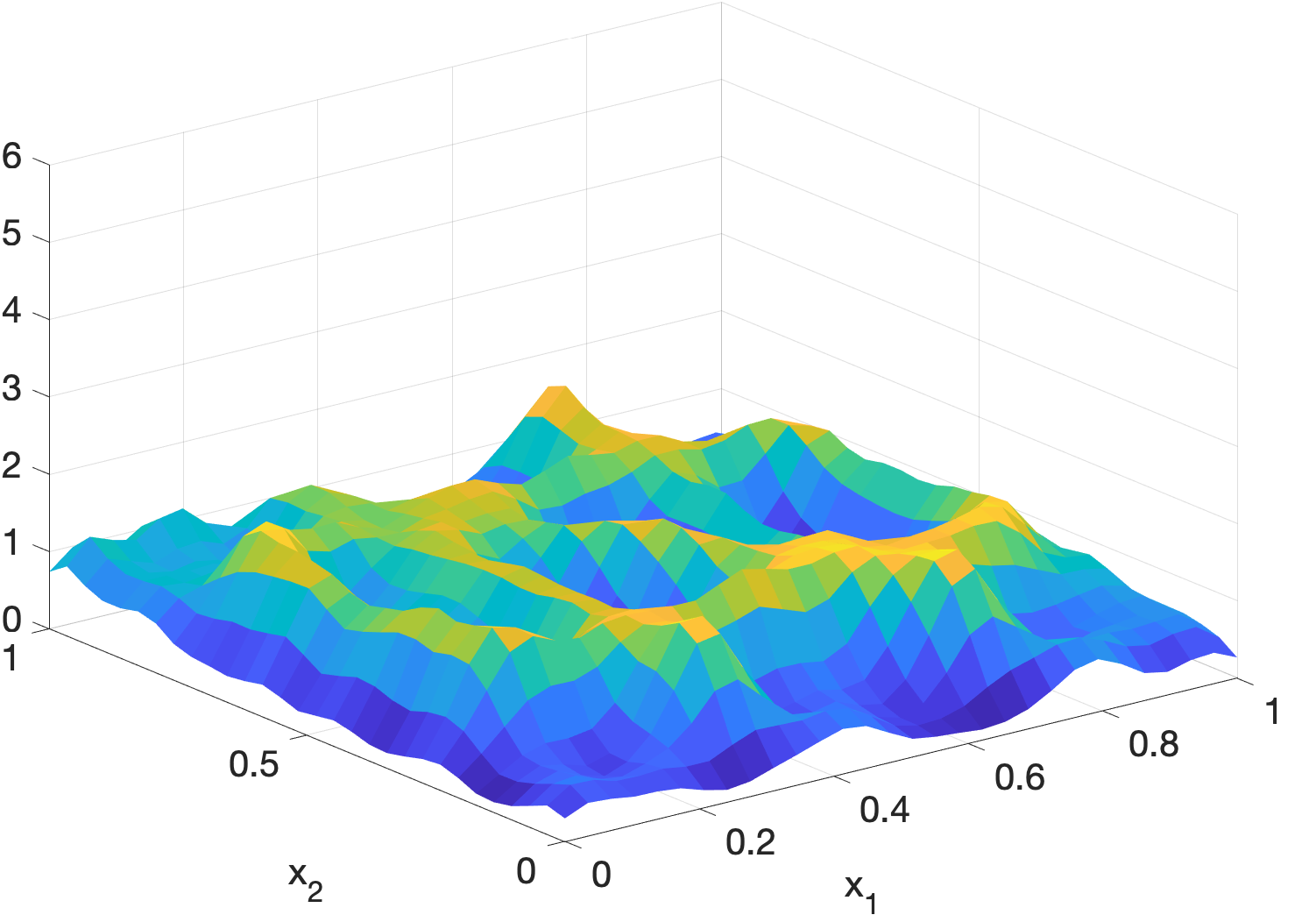}
    \end{subfigure}
    \begin{subfigure}[b]{0.24\textwidth}
        \centering
        \includegraphics[width=\textwidth]{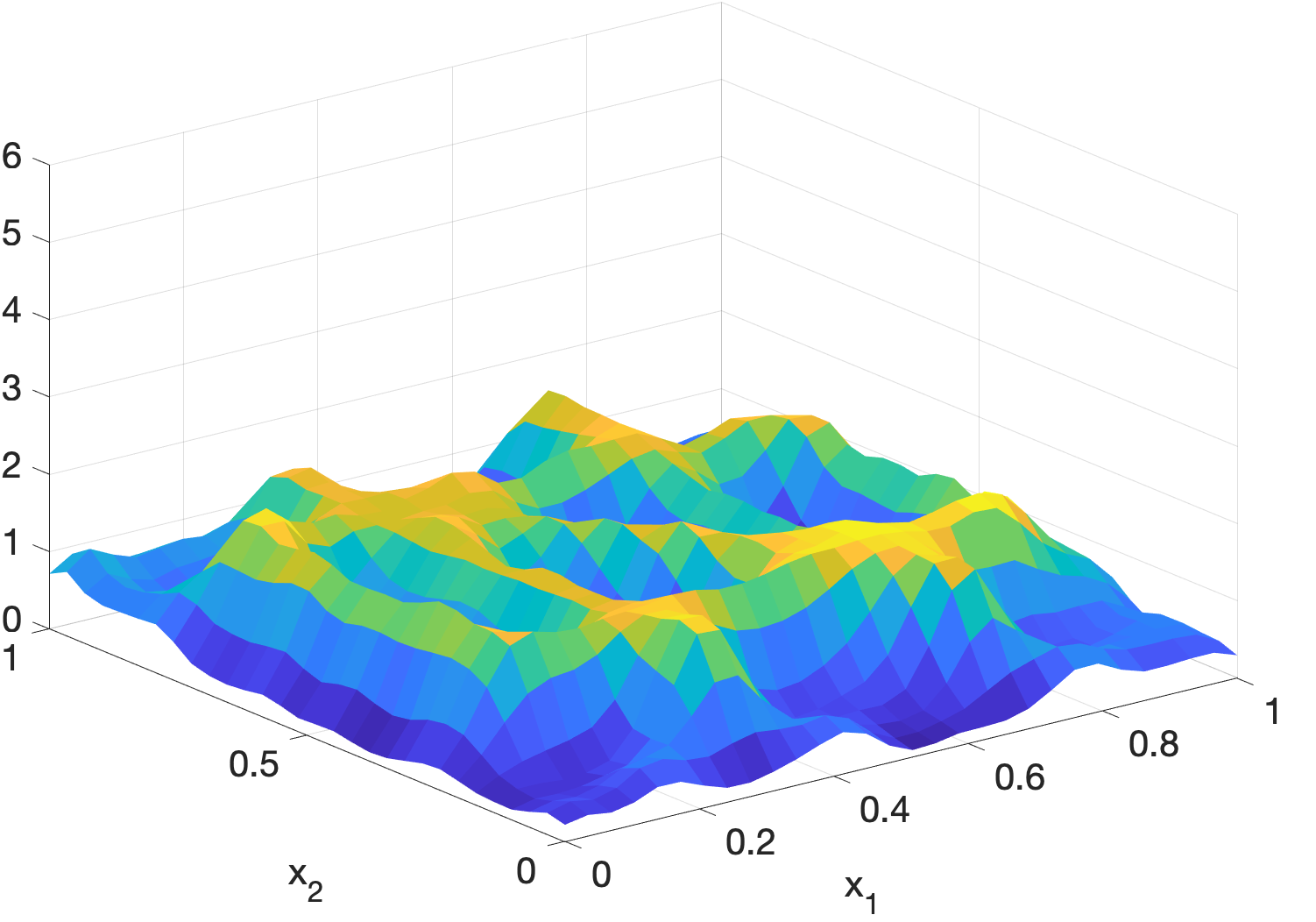}
    \end{subfigure}
    \begin{subfigure}[b]{0.24\textwidth}
        \centering
        \includegraphics[width=\textwidth]{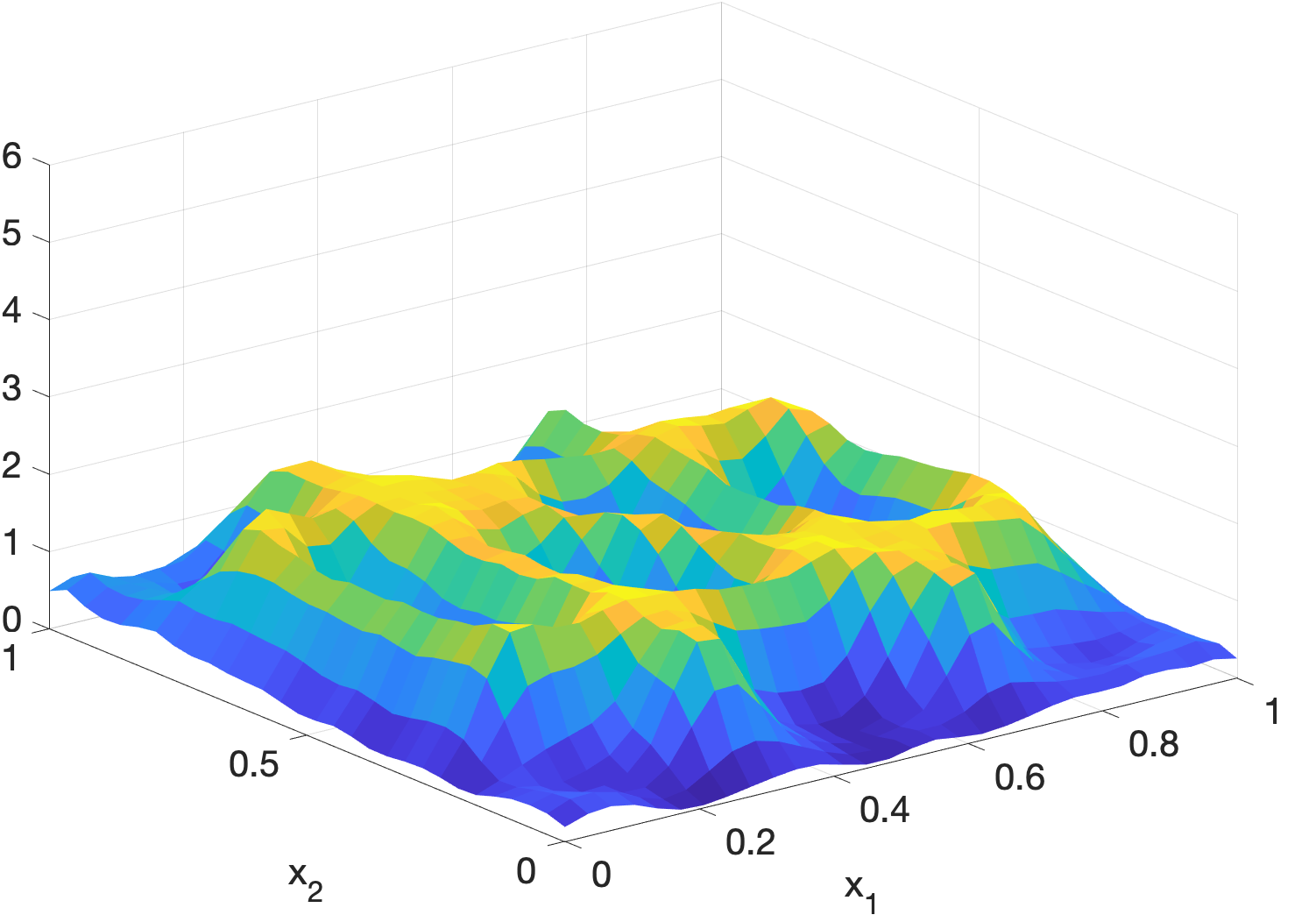}
    \end{subfigure}
    
    \caption{Top: robots' positions $\{x^i\}_{i=1}^N$. Bottom: the mean-field density $p$.}
    \label{fig:pdf evolution}
\end{figure*}

\section{Simulation study}
\label{section:simulation}
An agent-based simulation using 600 nonholonomic mobile robots is performed on Matlab to verify the proposed control law. 
We set $\Omega=(0,1)^2$.
Each robot is simulated according to \eqref{eq:nonholonomic mobile robot} where $u^i$ is given by \eqref{eq:backstepping control law} and the parameters are given by $\alpha=0.003$, $k=0.008$, $\epsilon_1(t)=\epsilon_2(t)=2$. 
Their initial positions are drawn from a uniform distribution. 
The target density $p_*(x)$ is illustrated in Fig. \ref{fig:desired density and convergence error}.
We discretize $\Omega$ into a $30\times30$ grid, and the time difference is $0.02s$. 
We use KDE (in which we set $h=0.04$) to estimate the real-time density $p$.
Simulation results are given in Fig. \ref{fig:pdf evolution}.
It is seen that the swarm is able to evolve towards the target density. 
The convergence error $\|p-p^*\|_{L^2}$ is given in Fig. \ref{fig:desired density and convergence error}, which converges and remains bounded. 

\begin{figure}[hbt!]
\setlength{\abovecaptionskip}{0.0cm}
\setlength{\belowcaptionskip}{-0.0cm}
    \centering
    \begin{subfigure}[b]{0.23\textwidth}
        \centering
        \includegraphics[width=\textwidth]{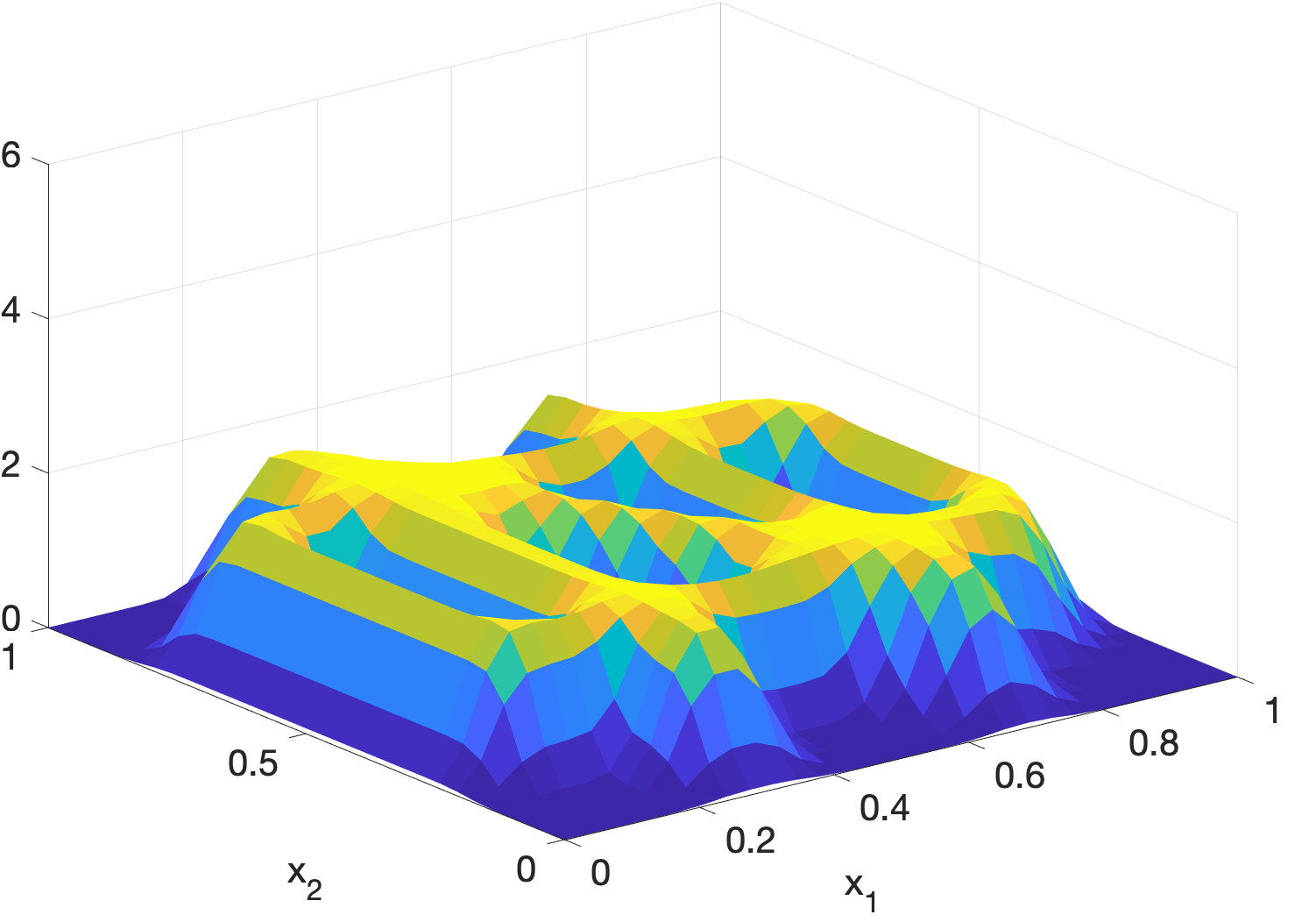}
    \end{subfigure}
    \begin{subfigure}[b]{0.23\textwidth}
        \centering
        \includegraphics[width=\textwidth]{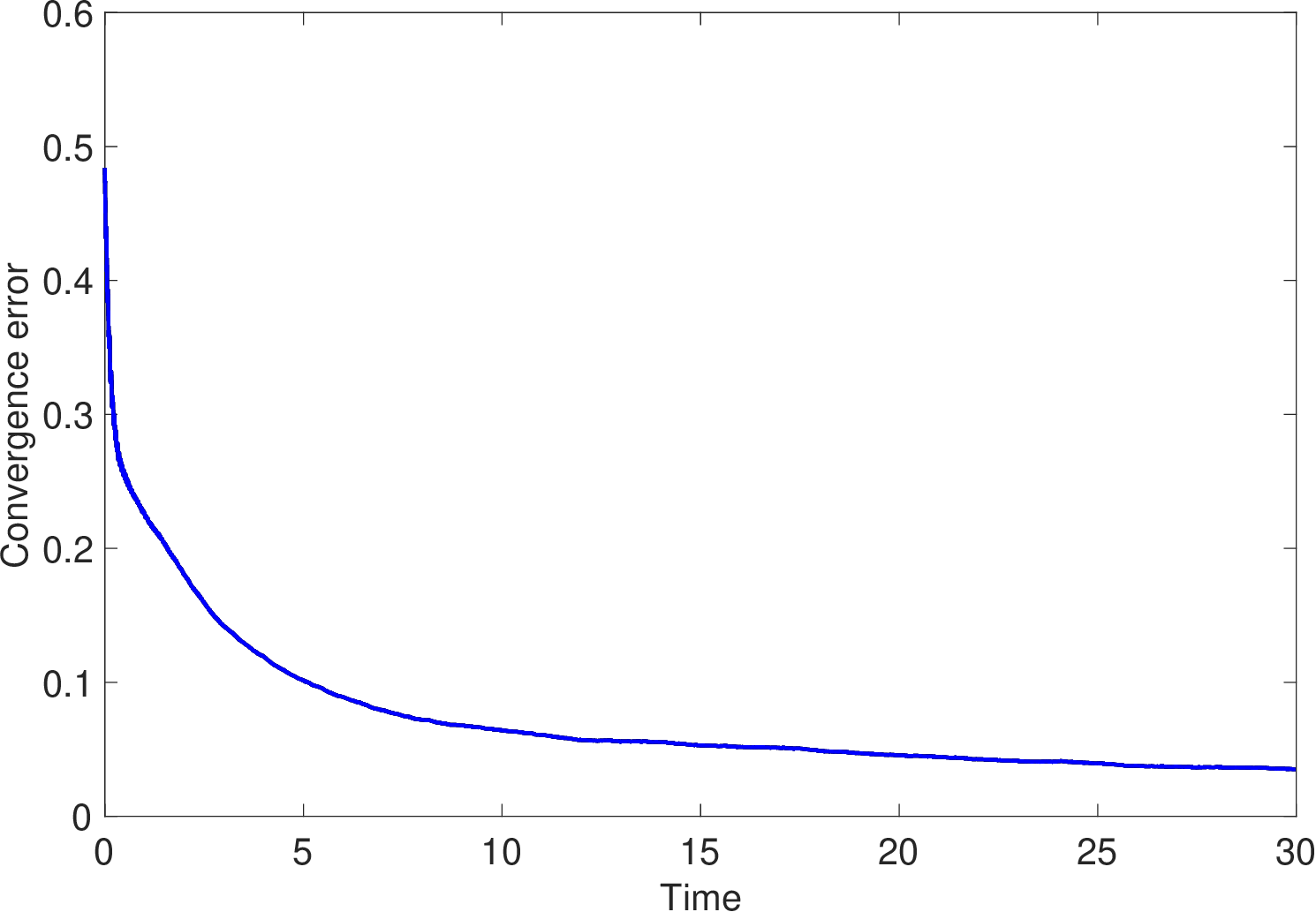}
    \end{subfigure}
    \caption{Left: desired density. Right: convergence error.}
    \label{fig:desired density and convergence error}
\end{figure}

\section{Conclusion}
This work studied the density control problem of large-scale heterogeneous strict-feedback stochastic systems.
We converted it to a control problem of a PDE that is distributedly driven by a family of heterogeneous SDEs and presented a backstepping design algorithm based on the density feedback technique. 
The presented backstepping design is suitable for many nonlinear stochastic systems including mobile robots.
We applied the algorithm to nonholonomic mobile robots and included a simulation to verify its effectiveness.
Our future work is to study the performance when the density is estimated using the density filters reported in \cite{zheng2020pde, zheng2021distributedmean}. 

\bibliographystyle{IEEEtran}
\bibliography{References}

\end{document}